\theoremstyle{plain}
\newtheorem{theorem}{Theorem}[section]
\newtheorem{proposition}[theorem]{Proposition}
\newtheorem{lemma}[theorem]{Lemma}
\theoremstyle{definition}
\theoremstyle{remark}
\title{One Particle Binding of Many-Particle Semi-Relativistic Pauli-Fierz Model}
\author{Itaru Sasaki}
\thanks{Email address: \texttt{isasaki@shinshu-u.ac.jp} \\
\indent This work was partly supported by Research supported by KAKENHI Y22740087, 
and was performed through the Program for Dissemination of Tenure-Track System  
funded by the Ministry of Education and Science, Japan}
\address{ Fiber-Nanotech Young Researcher Empowerment Center, 
 Shinshu University, Matsumoto 390--8621, Japan}
\email{isasaki@shinshu-u.ac.jp}
\keywords{binding condition; relativistic particle; quantum electrodynamics; functional integration}
\subjclass[2000]{35P05, 35P15}
\begin{document}
\begin{abstract}
It is shown that at least one particle is bound in the $N$-particle semi-relativistic 
Pauli-Fierz model with negative potential $V(\bx)$.
It is assumed that the particles have no spin and obey the Bose or Boltzmann statistics,
and the one particle Hamiltonian $\sqrt{-\Delta+M^2}-M+ V(\bx)$ has 
a ground state with negative energy $-e_0<0$, where $M>0$ denotes the mass of the particle. 
We show that the ground state energy of the total system $E^V(N)$ is less than $E^0(N)-e_0$.
 \end{abstract}

\maketitle

\section{Introduction}
We consider a quantum system of $N$-charged relativistic particles interacting with the quantum electromagnetic field
and the fixed nuclear potential $V(\bx)\leq 0$.
The Hamiltonian of this system is defined by
\begin{align}
 H^V := \sum_{i=1}^N T_\bA(\bp_i) + \sum_{i=1}^NV(\bx_i) + H_f + \sum_{i<j}W(\bx_i-\bx_j), \label{hamil}
\end{align}
where $\bx_i$ denotes the position of the $i$-th particle and  $T_\bA(\bp_i)$ denotes the kinetic energy
of the $i$-th particle which depends on the momentum $\bp_i:=-i\nabla_{\bx_i}$ and the quantized electromagnetic potential $\bA(\bx_i)$.
$H_f$ denotes the free photon Hamiltonian and $W$ is the interparticle potential energy. 
In this paper, the $N$-particles are treated as relativistic particles and we take the relativistic kinetic energy
\begin{align}
 T_\bA(\bp_i) = \sqrt{(\bp_i-q\bA(\bx_i)) ^2 + M^2}-M,   \label{relkin}
\end{align}
where $q$ and $M$ denote the charge and mass of the particle, respectively. 
The system described by the Hamiltonian \eqref{hamil} is called the semi-relativistic Pauli-Fierz model.
We assume that the particles have no spin and obey the Bose-Einstein or Boltzmann statistics.
We are interested in whether the nuclear potential $V$ can bind the particle.

Let $E^V(N)$ be the lowest energy of $H^V(N)$. Note that $E^0(N)$ is $E^V(N)$ with $V=0$. 
In order to prove the existence of ground state for $H^V$, it is very important to show 
the inequality
\begin{align}
  E^V(N) < \min \{ E^V(N-N')+E^0(N')| N'=1,2,\dots,N \}. \label{bindcond}
\end{align}
This inequality is called the binding condition.
In this paper, we prove that 
\begin{align}
 E^V(N) \leq E^0(N) - e_0,   \label{mainineq}
\end{align}
where $-e_0$ is the ground state energy of the one particle Hamiltonian
\begin{align}
  h^V = \sqrt{-\Delta+M^2}-M +V(\bx).
\end{align}
We assume that $-e_0<0$. Then \eqref{mainineq} implies the strict inequality $E^V(N)<E^0(N)$,
which is weaker case of \eqref{bindcond}.
Physically, this inequality means that at least one particle is bound in the lowest energy state.
When $N=1$, the inequality \eqref{bindcond} becomes $E^V(1)<E^0(1)$.
The inequality $E^V(1)\leq E^0(1)- e_\mathrm{NR,0}$ was proved by \cite{KMS:2011-1}
(see also \cite{2011RvMaP..23..375K}),
 where $-e_\mathrm{NR,0}$ is the ground state energy of the non-relativistic 
particle Hamiltonian $-(1/2M)\Delta+V$.
The inequality $E^V\leq E^0 -e_0$ with the relativistic ground state energy 
$-e_0:=\inf\spec(\sqrt{-\Delta + M^2}-M+V(\bx))$ was shown in \cite{HS10}. 
More better bound including the effect of the mass renormalization was given in \cite{2012arXiv1207.5638K}.
The spectrum of the polaron of this relativistic model was studied by \cite{MR2498761}.

In the case where the particles have non-relativistic kinetic energy
\begin{align}
    T_\bA(\bp_i) = \frac{1}{2M}(\bp_i-q\bA(\bx_i))^2,
\end{align}
the system is called the Pauli-Fierz model, and this model was widely studied.
The most important result on the existence of ground state is the paper \cite{Griesemer-Lieb-Loss:2001},
where, for the non-relativistic case, it is proved that the inequality \eqref{bindcond} implies
 the existence of ground state. 
The binding condition for an atomic Coulomb system was proved in the continuous paper \cite{Lieb-Loss:2003}.
The non-relativistic version of \eqref{mainineq} was originally shown in \cite{Griesemer-Lieb-Loss:2001}.
Our result \eqref{mainineq} can be considered as a relativistic improvement of \cite[Theorem 3.1]{Griesemer-Lieb-Loss:2001}.
We have used the method they had developed in \cite{Griesemer-Lieb-Loss:2001} with some modification.
But, unfortunately, our method can not be applied for the Fermionic or spinor particles.

The difficulties to prove \eqref{mainineq} come from the relativistic kinetic energy \eqref{relkin} which is 
clearly non-local.
The key idea of our proof is to use the convexity of the kinetic energy $T_\bA(\bp_i)$ when estimating 
 the energy expectation of a test function. 
The convexity of the kinetic energy follows from the property of the semi-group of the Hamiltonian
which is positivity preserving(\cite{LHB11}).
The semi-group is, however, positivity preserving only for the case when the particles are spinless and 
obey the Bose-Einstein or Boltzmann statistics.
But, it is remarkable that although the relativistic Schr\"odinger operator with a classical magnetic vector potential $\bA(\bx)$ 
may not have the convexity, relativistic kinetic energy $T_\bA(\bp_i)$ is convex.

In Sect. 2, we give the rigorous definition of the system, and state the main result. 
In Sect. 3, we give the proof of the main theorem. 

\section{Definition and Main Result}
The Hilbert space for the $N$-particle state is defined by 
\begin{align}
  \cH_\mathrm{part} = L^2(\BR^{3N}). \label{21fde}
\end{align}
When the $N$ particles obey the Bose-Einstein statistics, one needs to take $\otimes_\mathrm{sym}^N L^2(\BR^3)$ instead of \eqref{21fde} where $\otimes_\mathrm{sym}$ denotes the symmetric tensor product.
Almost all the discussions in this paper are independent of the such choice of statistics.
Hence we only consider the case of \eqref{21fde}.

The position of the particles are denoted by $\underline{\bX}=(\bx_1,\cdots,\bx_N)\in\BR^{3N}$
with $\bx_i=(x_i^1,x_i^2,x_i^3)\in \BR^3$, $i=1,\cdots,N$.
The Hilbert space for the photon field is the Fock space
\begin{align}
  \cH_\mathrm{phot} := \bigoplus_{n=0}^\infty \left[ \Tensor_\mathrm{sym}^n L^2(\BR^3\times \{1,2\})\right],
\end{align}
with $\tensor_\mathrm{sym}^0 L^2(\BR^3\times\{1,2\}) =:\BC$.
The Hilbert space for the semi-relativistic Pauli-Fierz model is defined by
\begin{align}
    \cH :=  \cH_\mathrm{part} \tensor \cH_\mathrm{phot},
\end{align}
The smeared creation and annihilation operators in $\cH_\mathrm{phot}$ are denoted by $a(f)^*, a(f), f\in L^2(\BR^3\times \{1,2\})$, respectively.
The vacuum vector is defined by $\Omega_\mathrm{phot}:=1\oplus 0\oplus 0 \cdots \cH_\mathrm{phot}$.
For a closed operator $T$ on $L^2(\BR^3\times \{1,2\})$, the second quantization of $T$ is denoted by 
$d\Gamma(T) :\cH_\mathrm{phot}\to \cH_\mathrm{phot}$. Let $\ome:\BR^3\to [0,\infty)$ be a Borel measurable function such that $0<\ome(\bk)<\infty$.
We also denote by the same symbol $\ome$ the multiplication operator by the function $\ome$, which acts in $L^2(\BR^3\times\{1,2\})$
as $(\ome f)(\bk,\lambda) = \ome(\bk)f(\bk,\lambda)$, $(\bk,\lambda)\in \BR^3\times\{1,2\}$.
The free Hamiltonian of the photon field is defined by
\begin{align}
  H_f := d\Gamma(\ome)
\end{align}
Let $\be^{(\lambda)}:\BR^3\to \BR^3, \lambda=1,2$ be polarization vectors, which is defined by
\begin{align}
  \be^{(\lambda)}(\bk) \cdot \be^{(\mu)}(\bk) = \delta_{\lambda,\mu}, \quad 
  \bk \cdot \be^{(\lambda)}(\bk) = 0, \qquad \bk \in\BR^3, ~ \lambda,\mu\in\{1,2\}. 
\end{align}
We write $\be^{(\lambda)}(\bk)= (e_1^{(\lambda)}(\bk),e_2^{(\lambda)}(\bk),e_3^{(\lambda)}(\bk))$ and 
suppose that each component $e_j^{(\lambda)}(\bk)$ is a Borel measurable function in $\bk$.
Let $\Lambda \in L^2(\BR^3)$ be a function such that 
\begin{align}
  \ome^{-1/2} \Lambda \in L^2(\BR^3).
\end{align}
For $j=1,2,3$, we set 
\begin{align}
   g_j(\bk,\lambda;\bx) := \ome(\bk)^{-1/2} \Lambda(\bk) e_j^{(\lambda)}(\bk) e^{-i\bk\cdot\bx},
   \qquad (\bk,\lambda) \in \BR^3\times\{1,2\}, ~ \bx\in\BR^3.
\end{align}
For each $\bx\in\BR^3$, $g_j(\bx)=g_j(\cdot,\cdot;\bx)$ can be regarded as an element of $L^2(\BR^3\times \{1,2\})$.
Then, the quantized electromagnetic field at $\bx\in\BR^3$ is defined by
\begin{align}
  A_j(\bx) := \frac{1}{\sqrt{2}} \overline{[a(g_j(\bx))+a^*(g_j(\bx))]},
\end{align}
where $\bar{T}$ denotes the closure of closable operator $T$.
The quantized electromagnetic field $\bA(\bx):=(A_1(\bx),A_2(\bx),A_3(\bx))$ satisfies the Coulomb gauge condition:
\begin{align}
  \sum_{j=1}^3 \frac{\del A_j(\bx)}{\del x^j} = 0, \qquad \bx =(x^1,x^2,x^3).
\end{align}
The Hilbert space $\cH$ can be identified as 
\begin{align}
  \cH \cong  \int_{\BR^{3N}}^\oplus \cH_\mathrm{phot} d^{3N}\underline{X}, \qquad \underline{X}=(\bx_1,\cdots, \bx_N) \in \BR^{3N}.
\end{align}
The quantized electromagnetic field on the total Hilbert space is defined by the fiber direct integral of $A_j(\bx)$:
\begin{align}
  A_j(\hat{\bx}_i) := \int_{\BR^{3N}}^\oplus A_j(\bx_i) d\underline{X}.
\end{align}

Let $C_\mathrm{c}^\infty$ be the set of infinitely differentiable functions with compact support. Let 
\begin{align}
 \cF_\mathrm{fin} := \cL[ \{ a^*(f_1)\cdots a^*(f_n)\Omega_\mathrm{phot}, \Omega_\mathrm{phot} 
                                            | f_j \in C_\mathrm{c}^\infty(\BR^3\times\{1,2\}), j=1,\cdots,n, \, n\in \BN \}]
\end{align}
be a finite photon subspace spanned by $C_\mathrm{c}^\infty(\BR^3\times\{1,2\})$. The subspace
\begin{align}
     \cD := (\hat\tensor_\mathrm{sym}^N C_\mathrm{c}^\infty(\BR^{3})) \hat\tensor \cF_\mathrm{fin}
\end{align}
is dense in $\cH$, where $\hat\tensor $ denotes the algebraic tensor product.
In what follows for notational convenience we omit the symbol $\otimes$ in $L^2(\BR^{3N})\otimes \cH_\mathrm{phot}$.
For two sets of operators $\mathbf{a}=(a_1,a_2,a_3), \mathbf{b}=(b_1,b_2,b_3)$, we denote
$\inner{\mathbf{a}f}{\mathbf{b}g}$ by $\sum_{j=1}^3 \inner{a_jf}{b_jf}$.
We define the non-negative quadratic form on $\cD\times \cD$ by 
\begin{align}
   K_{i,\bA}(\Psi,\Phi) =   \inner{(\bp_i - q \bA(\hat{\bx}_i))\Psi}{(\bp_i - q\bA(\hat{\bx}_i))\Phi}
                                      +M^2 \inner{\Psi}{\Phi},
\end{align}
for $i=1,\dots,N$, where $\bp_i := -i\nabla_{\bx_i} = -i ( \del_{x_i^1}, \del_{x_i^2}, \del_{x_i^3})$.
Note that $K_{i,\bA}$ is a positive closable form, and we denote its closure by $\bar{K}_{i,\bA}$.
Let $L_{i,\bA}$ be the self-adjoint operator associated with $\bar{K}_{i,\bA}$, i.e.,
\begin{align}
    \dom (L_{i,\bA}^{1/2}) &= Q(\bar{K}_{i,\bA}), \\
    \bar{K}_{i,\bA}(\Psi,\Phi) &= \inner{L_{i,\bA}^{1/2}\Psi}{L_{i,\bA}^{1/2}\Phi},
\end{align}
for all $\Psi,\Phi \in Q(\bar{K}_{i,\bA})$. Since $\cD\subset Q(L_{i,\bA})$, we have $\cD \subset \dom(L_{i,\bA}^{1/2})$.
We set
\begin{align}
   \sqrt{(\bp_i - q \bA(\hat\bx_i))^2 +M^2} - M :=  T_{\bA}(\bp_i)  := L_{i,\bA}^{1/2} - M.
\end{align}
The Hamiltonian of the $N$-particle semi-relativistic Pauli-Fierz model is defined by
\begin{align}
 H^V := \sum_{i=1}^N\left( \sqrt{ (\bp_i - q \bA(\hat\bx_i)^2 +M^2}-M+V(\bx_i) \right)
       + H_f + \sum_{1\leq i<j\leq N}W(\bx_i - \bx_j),
\end{align}
where $V:\BR^3\to \BR$ and $W(\bx):\BR^3\to \BR$ are measurable functions.
We set $H^0 := H^V|_{V=0}$.
We introduce the following conditions:
\begin{enumerate}
 \item[(H.1)] $\ome^{3/2} \Lambda \in L^2(\BR^3)$.
 \item[(H.2)]
$V(\bx)$ and $W(\bx)$ are relatively compact with respect to the three dimensional 
	      relativistic Schr\"odinger operator$\sqrt{-\Delta_\bx+1}-1$ and the relative bounds are strictly smaller than one.
 \item[(H.3)] The self-adjoint operator $h^V := \sqrt{-\Delta+M^2}-M+V(\bx)$  has a negative energy ground state $-e_0<0$.
 \item[(H.4)] $V(\bx)\leq 0$ for all $\bx\in\BR^3$.
\end{enumerate}
The essential self-adjointness was proved in \cite[Corollary 7.60]{LHB11}.

\begin{proposition}(Essential self-adjointness)
Assume (H.1) and (H.2). 
Then,  the Hamiltonians $H^V, H^0$ are essentially self-adjoint on $\cD$.
\end{proposition}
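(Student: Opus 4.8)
The plan is to split the Hamiltonian as $H^V = H_{\mathrm{free}} + V_{\mathrm{tot}}$, where
\begin{align}
  H_{\mathrm{free}} := \sum_{i=1}^N T_\bA(\bp_i) + H_f,
  \qquad
  V_{\mathrm{tot}} := \sum_{i=1}^N V(\bx_i) + \sum_{1\leq i<j\leq N} W(\bx_i-\bx_j),
\end{align}
to first prove that the free part $H_{\mathrm{free}}$ is essentially self-adjoint on $\cD$, and then to add $V_{\mathrm{tot}}$ as a Kato--Rellich perturbation. Since $H^0$ is obtained by setting $V=0$ while keeping $W$, and $W$ is treated on the same footing as $V$, both operators are covered simultaneously.

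For the free part I would apply a commutator theorem of Nelson type with the comparison operator
\begin{align}
  S := \sum_{i=1}^N \bigl(-\Delta_{\bx_i}\bigr) + H_f + d\Gamma(1) + 1,
\end{align}
which is non-negative, self-adjoint, and has $\cD$ as a core. One must verify on $\cD$ the relative bound $\|H_{\mathrm{free}}\Psi\| \leq c\|S\Psi\|$ and the commutator estimate $|\inner{H_{\mathrm{free}}\Psi}{S\Psi}-\inner{S\Psi}{H_{\mathrm{free}}\Psi}| \leq c\|S^{1/2}\Psi\|^2$. The free photon part is harmless since $H_f$ commutes with $S$, so everything reduces to the kinetic terms $T_\bA(\bp_i)=L_{i,\bA}^{1/2}-M$; granting the two bounds, Nelson's theorem yields essential self-adjointness of $H_{\mathrm{free}}$ on $\cD$. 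This is exactly what \cite[Corollary 7.60]{LHB11} obtains through the functional-integral (Feynman--Kac--It\^o) representation of $e^{-tH_{\mathrm{free}}}$, which provides an alternative and self-contained route.

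To incorporate the potentials I would use the positivity-preserving property of the semigroup recorded in \cite{LHB11}, in the form of the pointwise diamagnetic domination $|e^{-tH_{\mathrm{free}}}\Psi| \leq e^{-tH_{\mathrm{free}}^0}|\Psi|$, where $H_{\mathrm{free}}^0 := \sum_{i=1}^N(\sqrt{-\Delta_{\bx_i}+M^2}-M)+H_f$ is the comparison operator with the magnetic coupling switched off. Passing to resolvents, this domination gives $\|V_{\mathrm{tot}}(H_{\mathrm{free}}+E)^{-1}\| \leq \||V_{\mathrm{tot}}|(H_{\mathrm{free}}^0+E)^{-1}\|$ for $E>0$, so a relative bound with respect to $H_{\mathrm{free}}^0$ transfers to $H_{\mathrm{free}}$. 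For $H_{\mathrm{free}}^0$ the estimate is elementary: its summands mutually commute, and by (H.2) each of $V(\bx_i)$ and $W(\bx_i-\bx_j)$ is relatively compact, hence infinitesimally bounded, with respect to $\sqrt{-\Delta+1}-1$, and therefore, up to a bounded term coming from the different mass parameters, with respect to $\sqrt{-\Delta_{\bx_i}+M^2}-M \leq H_{\mathrm{free}}^0$. Infinitesimality lets the finitely many relative bounds be chosen so small that their sum over particles and pairs stays strictly below one. The Kato--Rellich theorem then shows that $H^V$ and $H^0$ are self-adjoint on $\dom(H_{\mathrm{free}})$ and essentially self-adjoint on every core of $H_{\mathrm{free}}$, in particular on $\cD$.

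The main obstacle is the kinetic part, because the minimal coupling $\bp_i-q\bA(\hat\bx_i)$ sits inside the square root defining $L_{i,\bA}^{1/2}$: the operator $T_\bA(\bp_i)$ is a non-local, non-polynomial function of the field and position operators, so one cannot reduce $[H_{\mathrm{free}},S]$ to finitely many normal-ordered monomials as in the non-relativistic Pauli--Fierz model. I would control it through the integral representation $L_{i,\bA}^{1/2} = \pi^{-1}\int_0^\infty L_{i,\bA}(L_{i,\bA}+s)^{-1}s^{-1/2}\,ds$, which turns the commutator into resolvent commutators $[(L_{i,\bA}+s)^{-1},S]$ that must be bounded uniformly enough in $s$ to survive the integration; establishing the diamagnetic domination of the full semigroup used above is of the same functional-integral difficulty. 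This is precisely the technical content carried out in \cite[Corollary 7.60]{LHB11}, to which I refer for the complete argument.
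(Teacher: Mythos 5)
Your proposal is correct in outline, but it takes a different route from the paper, which offers no argument of its own: the paper simply invokes \cite[Corollary 7.60]{LHB11}, where essential self-adjointness of the \emph{full} Hamiltonian $H^V$ --- potentials $V$ and $W$ included --- is established directly by functional-integral (Feynman--Kac--It\^o) methods. You instead split off $V_{\mathrm{tot}}$ and reinstall it by Kato--Rellich on top of the free operator. That layer of your argument is sound: relative compactness in (H.2) yields relative bound zero with respect to $\sqrt{-\Delta+1}-1$ (so the smallness needed for the sum over particles and pairs is available), the summands of your comparison operator $H_{\mathrm{free}}^0$ are commuting and non-negative so the one-particle bounds lift to the $N$-body operator, and the diamagnetic domination of $e^{-tH_{\mathrm{free}}}$ by the decoupled semigroup does transfer the relative bound through the resolvent formula --- though note that the domination only makes sense in the Schr\"odinger/$Q$-space representation after conjugating by the unitary $I\otimes U$ of the paper's Section 3, since $|\Psi|$ is undefined in Fock space, a representational point you gloss over. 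The catch is that your construction is logically redundant: the corollary you cite for the hard kinetic core already covers the potentials, and your Nelson-commutator sketch for $H_{\mathrm{free}}$ is explicitly left unverified (``granting the two bounds''), so the only load-bearing ingredient in your proof is the very same citation the paper uses. What your architecture buys is transparency rather than new content --- it isolates where (H.1) and (H.2) enter and would allow the hypotheses on $V$ and $W$ to be varied independently of the field coupling --- at the cost of an extra domination step that the paper's one-line reference makes unnecessary.
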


We denote the closure of $H^V$ and $H^0$ by the same symbol.
Let $E^0(N) = \inf\spec(H^0)$ and $E^V(N)=\inf\spec(H^V)$ are the ground state energies.
The main result in this paper is the following:
\begin{theorem}{\label{thm}}
 Assume (H.1)--(H.4). Then,  for all $q\in\BR$ and $M\geq 0$, the inequality
\begin{align}
   E^V(N) \leq E^0(N) - e_0  \label{thmineq}
\end{align}
holds.
\end{theorem}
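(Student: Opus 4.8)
The plan is to prove \eqref{thmineq} variationally, by exhibiting a trial vector $\Psi$ for $H^V$ with $\langle\Psi,H^V\Psi\rangle\le(E^0(N)-e_0)\|\Psi\|^2+o(1)$. Two ingredients enter. First, let $\phi\in L^2(\BR^3)$ be the normalized ground state of $h^V$, which by (H.3) satisfies $\langle\phi,h^V\phi\rangle=-e_0$; since $e^{-th^V}$ is positivity preserving, $\phi$ may be taken strictly positive. Second, since $E^0(N)=\inf\spec(H^0)$ lies in the spectrum, choose a Weyl sequence $\Psi_0=\Psi_0^{(n)}$ for $H^0$, i.e.\ $\|(H^0-E^0(N))\Psi_0\|\to0$ with $\|\Psi_0\|=1$; because $e^{-tH^0}$ is positivity preserving (this is exactly where the spinless Bose/Boltzmann hypothesis is used) the $\Psi_0$ may be taken pointwise nonnegative in the functional-integral ($Q$-space) representation of $\cH$. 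Using the total-translation invariance of $H^0$ I will in addition smear $\Psi_0$ over simultaneous translations of particles and field by a slowly varying positive profile, at a vanishing energy cost, so that its one-particle marginal density $\mu(\bx_1):=\int|\Psi_0|^2\,d(\text{field and other particles})$ is nearly constant, $\mu\approx\rho$, on the support of $\phi$. The trial vector is $\Psi:=\phi(\bx_1)\Psi_0$.

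The next step is a ground-state (Perron--Frobenius) transform isolating the only noncommuting term. Writing $H^V=H^0+\sum_iV(\bx_i)$ and $B:=H^0-T_\bA(\bp_1)$, observe that $B$, $H_f$, the pair potentials $W$, and every $V(\bx_i)$ commute with multiplication by the real function $\phi(\bx_1)$, so that only $T_\bA(\bp_1)$ fails to commute with $\phi(\bx_1)$. A direct computation using $H^0\Psi_0=E^0(N)\Psi_0+r_n$ with $\|r_n\|\to0$ gives
\[
\langle\Psi,(H^0-E^0(N))\Psi\rangle=D+\langle\phi(\bx_1)^2\Psi_0,r_n\rangle,\qquad D:=\langle\phi(\bx_1)\Psi_0,T_\bA(\bp_1)\phi(\bx_1)\Psi_0\rangle-\langle\phi(\bx_1)^2\Psi_0,T_\bA(\bp_1)\Psi_0\rangle,
\]
the error being $O(\|\phi\|_\infty^2\,\|r_n\|)\to0$; equivalently $D=-\tfrac12\langle\Psi_0,[\phi(\bx_1),[\phi(\bx_1),T_\bA(\bp_1)]]\Psi_0\rangle$. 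Adding the potential and discarding the terms $V(\bx_i)$, $i\ge2$, which are $\le0$ by (H.4), reduces the theorem to the estimate $D+\langle\Psi,V(\bx_1)\Psi\rangle\le-e_0\|\Psi\|^2+o(1)$.

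The heart of the matter is a convexity/diamagnetic bound on $D$. For the free operator $T_0:=\sqrt{-\Delta+M^2}-M$ one has the Beurling--Deny jump representation $\langle g,T_0g\rangle=\tfrac12\iint(g(\bx)-g(\bx'))^2J(\bx-\bx')\,d\bx\,d\bx'$ with $J\ge0$, and the same positivity-preserving structure endows $T_\bA(\bp_1)$ with a nonnegative jump kernel $j_\bA$ in the $(\bx_1,\text{field})$ variables; this nonnegativity is precisely the convexity of $T_\bA$. With $\Psi_0\ge0$ the double-commutator identity then yields $D=\tfrac12\iint\Psi_0\,\Psi_0\,(\phi(\bx_1)-\phi(\bx_1'))^2\,j_\bA\ge0$, the weight being $\Psi_0$ evaluated at the two configurations. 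The key inequality is that the field-averaged hopping kernel cannot exceed the free relativistic one: $\int\Psi_0\,\Psi_0\,j_\bA\,d(\text{field, other particles})\le\rho\,J(\bx_1-\bx_1')+o(1)$ after flattening $\mu\approx\rho$, by the diamagnetic inequality for the quantized field (a Gaussian/Jensen suppression of the magnetic hopping factor in the functional integral). Hence $D\le\rho\,\langle\phi,T_0\phi\rangle+o(1)$, while $\langle\Psi,V(\bx_1)\Psi\rangle=\int V(\bx_1)\phi(\bx_1)^2\mu(\bx_1)\,d\bx_1\to\rho\langle\phi,V\phi\rangle$ and $\|\Psi\|^2=\int\phi^2\mu\to\rho$. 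Combining, $D+\langle\Psi,V(\bx_1)\Psi\rangle\le\rho\langle\phi,h^V\phi\rangle+o(1)=-e_0\rho+o(1)=-e_0\|\Psi\|^2+o(1)$; letting $n\to\infty$ and then the smearing spread out proves \eqref{thmineq}.

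The main obstacle is this last diamagnetic/convexity step. In the non-relativistic case $(\bp-q\bA)^2$ is quadratic, so the double commutator $[\phi(\bx_1),[\phi(\bx_1),(\bp_1-q\bA)^2]]=2|\nabla\phi|^2$ is local and field-independent, and the estimate is immediate; here the square root $\sqrt{(\bp_1-q\bA)^2+M^2}$ is nonlocal and the field does \emph{not} drop out of the double commutator. One must therefore control the field dependence of $j_\bA$ and show that, once the photons are integrated out, it is dominated by the free relativistic jump kernel --- which requires the functional-integral representation and the positivity-preserving property, and is exactly why the argument is confined to spinless particles with Bose or Boltzmann statistics (so that $\Psi_0$ and $\phi$ can be chosen nonnegative); for Fermionic or spinor states the sign-changing ground states destroy the comparison.
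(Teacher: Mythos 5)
Your overall strategy---localize with the one-body state $\phi$, exploit positivity preservation of $e^{-tH^0}$ in the functional-integral ($Q$-space) representation to get convexity of the relativistic kinetic form, and bound the localization error by the \emph{free} relativistic kinetic energy $\langle\phi,(\sqrt{-\Delta+M^2}-M)\phi\rangle$---is the same family of argument as the paper's. But two of your steps are asserted rather than proved, and the first is a genuine gap: the density-flattening. You need the one-particle marginal $\mu$ of the approximate ground state $\Psi_0$ to be nearly constant, $\mu\approx\rho$, on (and, for the jump-kernel bound, well beyond) the support of $\phi$, obtained by ``smearing over translations at vanishing energy cost.'' Writing the smeared state as $\Psi_R=\int\chi_R(\by)\,\cT_\by\Psi_0\,d\by$, its density is \emph{not} $\chi_R^2\ast\mu$: the cross terms $\by\neq\by'$ survive, and $\|\Psi_R\|^2=\iint\chi_R(\by)\chi_R(\by')\langle\Psi_0,\cT_{\by'-\by}\Psi_0\rangle\,d\by\,d\by'$ depends on overlaps you do not control ($H^0$ is translation invariant, so $\Psi_0$ is only an approximate minimizer with no localization or decay information). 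Moreover any flattened density necessarily tends to zero, so every error term---including $\langle\phi(\bx_1)^2\Psi_0,r_n\rangle=O(\|r_n\|\sqrt{\rho})$ and the $o(1)$ in your kernel bound---must be shown small \emph{relative to} $\rho$, which your sketch does not address. This flattening problem is exactly the obstruction that the Griesemer--Lieb--Loss translation-averaging trick was designed to bypass, and the paper uses it: one integrates the Rayleigh quotient over all shifts $\by$ of $\phi$, invokes the averaged localization estimate (the paper's Lemma 3.3, imported from [HS10]), namely $\int_{\BR^3}d\by\,\langle\phi(\hat\bx_i+\by)F,\,H^0\phi(\hat\bx_i+\by)F\rangle\leq\langle F,H^0F\rangle+\langle\phi,(\sqrt{-\Delta+M^2}-M)\phi\rangle$, and concludes by pigeonhole in $\by$ using $\int\|\Phi_\by\|^2d\by=N$---no density control whatsoever is needed. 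Notably, after $\by$-averaging the field dependence cancels by operator identities; your pointwise ``field-averaged hopping kernel dominated by the free relativistic kernel $J$'' is strictly stronger than what is needed, is the analytic crux of your whole argument, and is nowhere proved (a fiber-norm diamagnetic bound for $e^{-tT_\bA(\bp_1)}$ plus Cauchy--Schwarz in the field variables is a plausible route, but it is a substantial lemma, not a remark).

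A second, smaller defect: for Bose statistics your trial vector $\phi(\bx_1)\Psi_0$ is not permutation symmetric, so it is not an admissible trial state on the symmetric subspace. The paper's test function $\bigl[\sum_{i=1}^N\phi(\hat\bx_i)^2\bigr]^{1/2}\cT_\by (I\otimes U^{-1})|\tilde F|$ is symmetric whenever $F$ is, and estimating it is precisely where the convexity statement (Lemma 3.2), $\bigl(\sqrt{f^2+g^2},\,h\sqrt{f^2+g^2}\bigr)\leq(f,hf)+(g,hg)$ for $e^{-th}$ positivity preserving, together with $(|f|,h|f|)\leq\langle f,hf\rangle$ to replace $\tilde F$ by $|\tilde F|$, is used. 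Your double-commutator identity $D=-\tfrac12\langle\Psi_0,[\phi,[\phi,T_\bA(\bp_1)]]\Psi_0\rangle\geq0$ is morally the same convexity, but the paper never needs a Beurling--Deny jump representation of $T_\bA$, a jump kernel $j_\bA$, or any diamagnetic kernel domination: Lemma 3.2 plus the $\by$-averaged Lemma 3.3 suffice. If you repair your argument by averaging over translations of $\phi$ instead of flattening $\Psi_0$, it essentially collapses to the paper's proof (the $i=1$ term of its symmetrized sum), so the fix exists---but as written, the flattening step and the unproven kernel-domination claim are genuine gaps.
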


\section{Proof of Theorem \ref{thm}}
We start from the the following basic fact.
 \begin{lemma}{\label{lem1}}
 Let $(Q,\Sigma,\mu)$ be a $\sigma$-finite measure space,
and $T$ be a positivity preserving bounded operator on $L^2(Q,d\mu)$.
Then, for all non-negative $f,g\in L^2(Q,d\mu)$, the following holds
\begin{align}
  (Tf)(q)^2 + (Tg)(q)^2 \leq [(T(f^2+g^2)^{1/2})(q)]^2, \quad \mu\text{-a.e.}~q\in Q.
\end{align}
\end{lemma}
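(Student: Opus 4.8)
The plan is to deduce the stated pointwise inequality from a one-parameter family of linear inequalities, using only the fact that a positivity preserving linear operator is monotone. First I would record the elementary observation that $T$ is \emph{monotone}: if $u \le v$ holds $\mu$-a.e., then $v-u\ge 0$, so positivity preservation gives $T(v-u)\ge 0$, and linearity yields $Tu\le Tv$ $\mu$-a.e. I would also note that $(f^2+g^2)^{1/2}\in L^2(Q,d\mu)$, since its squared norm equals $\|f\|^2+\|g\|^2<\infty$, so that every quantity appearing below is a well-defined element of $L^2(Q,d\mu)$, finite $\mu$-a.e.

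Next, for a fixed unit vector $(a,b)$ with $a^2+b^2=1$, the pointwise Cauchy--Schwarz inequality in the plane gives $a\,f(x)+b\,g(x)\le (f(x)^2+g(x)^2)^{1/2}$ for $\mu$-a.e. $x$. Applying $T$, then invoking monotonicity and linearity, produces
\begin{align}
  a\,(Tf)(q) + b\,(Tg)(q) \le \bigl(T(f^2+g^2)^{1/2}\bigr)(q) \qquad \mu\text{-a.e.}
\end{align}
The essential difficulty is that the exceptional null set in this inequality depends on the choice of $(a,b)$, whereas I ultimately want to take a supremum over all unit vectors, of which there are uncountably many; one cannot union uncountably many null sets and stay null.

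To circumvent this, I would fix a countable dense subset $\{(a_n,b_n)\}$ of the arc of unit vectors with non-negative coordinates, let $N_n$ be the $\mu$-null set on which the $n$-th inequality fails, and set $N=\bigcup_n N_n$, still $\mu$-null. For every $q\notin N$ all the inequalities hold simultaneously, so $\sup_n\bigl(a_n (Tf)(q)+b_n (Tg)(q)\bigr)\le \bigl(T(f^2+g^2)^{1/2}\bigr)(q)$. Since $Tf\ge 0$ and $Tg\ge 0$ by positivity preservation, and the map $(a,b)\mapsto a(Tf)(q)+b(Tg)(q)$ is continuous, the supremum over the dense set equals the supremum over the whole arc, which is attained at the normalized vector $(Tf(q),Tg(q))/\sqrt{(Tf)(q)^2+(Tg)(q)^2}$ and equals $\sqrt{(Tf)(q)^2+(Tg)(q)^2}$. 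Squaring the resulting inequality yields the claim. The only delicate point in the whole argument is this reduction from an uncountable to a countable family via density together with continuity of the linear functional in the parameter $(a,b)$; everything else is a direct consequence of monotonicity and the planar Cauchy--Schwarz inequality.
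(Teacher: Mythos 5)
Your proof is correct, and it takes a genuinely different route from the paper's. The paper argues by approximation: it first proves the inequality for non-negative simple functions written over a common disjoint partition, expanding $(Tf)^2+(Tg)^2$ as a double sum and applying the planar Cauchy--Schwarz inequality to the coefficient pairs $(\alpha_i,\beta_i)$, $(\alpha_j,\beta_j)$ --- legitimate because each $T\chi_{A_i}\ge 0$ --- and then extends to general non-negative $f,g$ by monotone approximation, using boundedness of $T$ to get norm convergence and passing to an a.e.\ convergent subsequence. You instead use the dual description of the Euclidean norm, $\sqrt{u^2+v^2}=\sup\{au+bv : a^2+b^2=1,\ a,b\ge 0\}$ valid for $u,v\ge 0$, together with monotonicity of $T$, and you correctly dispose of the parameter-dependent null sets via a countable dense family of directions plus continuity of $(a,b)\mapsto a(Tf)(q)+b(Tg)(q)$; restricting to the non-negative quarter-arc is indeed harmless precisely because $Tf,Tg\ge 0$ a.e.\ by positivity preservation. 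Both proofs have the two-dimensional Cauchy--Schwarz inequality at their core, but yours applies it directly at the level of functions and so avoids the simple-function approximation and the limiting argument altogether --- in particular you never need the norm convergence $\|T(f-f_n)\|\to 0$ or the subsequence extraction, and in fact you never use boundedness of $T$ beyond its being everywhere defined, linear, and positivity preserving, which makes your argument slightly more general and immediately extendable to sums of more than two squares. The one point you should state explicitly is the degenerate case $(Tf)(q)=(Tg)(q)=0$, where the phrase ``attained at the normalized vector'' is vacuous; there the claimed inequality holds trivially since $(T(f^2+g^2)^{1/2})(q)\ge 0$ a.e., again by positivity preservation.
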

\begin{proof}
First we assume that $f,g$ are non-negative simple functions, i.e.,
\begin{align}
  f(q) = \sum_{i=1}^n \alpha_i \chi_{A_i}(q), \quad 
  g(q) = \sum_{i=1}^n \beta_i \chi_{A_i}(q), 
\end{align}
with $\alpha_i,\beta_i \geq 0$ and $A_i \in\Sigma$,
where $\chi_{A_i}$ is the characteristic function of $A_i$.
 One can assume that $A_i\cap A_j = \emptyset$.
Then, by noting that $T\chi_{A_i}\geq 0$, we have
\begin{align}
 (Tf)^2+ (Tg)^2
 &= \sum_{i}\sum_{j} (\alpha_i\alpha_j + \beta_i \beta_j) (T\chi_{A_i})(T\chi_{A_j}) \\
 &\leq \sum_i\sum_{j} (\alpha_i^2+\beta_i^2)^{1/2} (\alpha_j^2+\beta_j^2)^{1/2} (T\chi_{A_i})(T\chi_{A_j}) \\
 &=  \left( T\sum_{i}(\alpha_i^2+\beta_i^2)^{1/2}\chi_{A_i} \right)^2 \\
 &= \left( T (\sum_{i} \alpha_i^2\chi_{A_i} + \sum_i \beta_i^2 \chi_{A_i})^{1/2} \right)^2\\
&  = (T(f^2+g^2)^{1/2})^2 \label{lemineq1}
\end{align}
For any $f,g \in L^2(Q,d\mu)$, there exist simple functions $f_n,~g_n$ such that 
$0\leq f_n \leq f$, $0\leq g_n \leq g$ and $f_n(q)\nearrow f(q)$, $g_n(q) \nearrow g(q)$, $\mu$-a.e.$q$ as $n\to\infty$.
By \eqref{lemineq1}, we have
\begin{align}
  (Tf_n)(q)^2 + (Tg_n)(q)^2 \leq [(T(f_n^2+g_n^2)^{1/2})(q)]^2 \leq [(T(f^2+g^2)^{1/2})(q)]^2
\end{align}
for $\mu$-a.e. $q\in Q$. 
Since $T$ is bounded, we have $\norm{T(f-f_n)}\to 0$ as $n\to\infty$.
By taking the subsequence $\{n_j\}_j$, we have 
\begin{align}
  \lim_{j\to\infty}   (Tf_{n_j})(q)^2 + (Tg_{n_j})(q)^2 
  = (Tf)(q)^2 + (Tg)(q)^2 \leq [ (T(f^2+g^2)^{1/2})(q) ]^2,
\end{align}
for $\mu$-a.e.$q$.
\end{proof}
For a semi-bounded self-adjoint operator $h$, we denote the associated quadratic form 
by $(f,hg)$, $f,g\in Q(h)$.
As a consequence of Lemma \ref{lem1}, we have the following fact:
\begin{lemma}{\label{lem2}}
 Let $h$ be a semi-bounded self-adjoint operator on a $L^2$-space such that
$e^{-th}$ is positivity preserving for all $t>0$.
Then, for all $f\in \dom(h)$, $|f| \in Q(h)$ and
\begin{align}
  (|f|,h|f|) \leq \inner{f}{hf}. \label{310}
\end{align}
In particular, for non-negative $f,g \in Q(h)$, $\sqrt{f^2+g^2} \in Q(h)$ and
\begin{align}
  \left(\sqrt{f^2+g^2},h\sqrt{f^2+g^2}\right) \leq (f,hf) + (g,hg)    \label{311}
\end{align}
holds.
\end{lemma}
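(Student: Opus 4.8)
The plan is to express the quadratic form of $h$ through the semi-group $e^{-th}$ and then transfer the pointwise estimate of Lemma \ref{lem1} into a form inequality by a monotone limit. First I would reduce to the case $h\ge 0$. Choosing a constant $c\ge 0$ with $h+c\ge 0$, the operator $e^{-t(h+c)}=e^{-tc}e^{-th}$ is still positivity preserving, since multiplication by the positive scalar $e^{-tc}$ preserves positivity. Because $\norm{\,|f|\,}=\norm{f}$ and $\norm{\sqrt{f^2+g^2}}^2=\norm{f}^2+\norm{g}^2$, the shift by $c$ cancels on both sides of \eqref{310} and \eqref{311}. Hence it suffices to prove both statements for $h\ge 0$, replacing $h$ by $h+c$.

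The main analytic tool is the identity, valid for every $u\in L^2(Q,d\mu)$ with values in $[0,\infty]$,
\begin{align}
  \sup_{t>0}\frac1t\inner{u}{(1-e^{-th})u}
     =\lim_{t\downarrow 0}\frac1t\inner{u}{(1-e^{-th})u}=(u,hu),
\end{align}
where $u\in Q(h)$ precisely when this common value is finite. This follows from the spectral theorem: writing $E_\lambda$ for the spectral family of $h\ge 0$, one has $\frac1t\inner{u}{(1-e^{-th})u}=\int_{[0,\infty)}\frac{1-e^{-t\lambda}}{t}\,d\norm{E_\lambda u}^2$, and an elementary computation shows that $t\mapsto\frac{1-e^{-t\lambda}}{t}$ increases to $\lambda$ as $t\downarrow 0$. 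Monotone convergence then gives the limit $\int_{[0,\infty)}\lambda\,d\norm{E_\lambda u}^2=\norm{h^{1/2}u}^2=(u,hu)$, and the limit is monotone, hence equal to the supremum.

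For \eqref{311} I would apply Lemma \ref{lem1} with the bounded positivity preserving operator $T=e^{-th}$. Viewing $F=(f,g)$ as an $\BR^2$-valued function with $|F|=\sqrt{f^2+g^2}$, the Cauchy-Schwarz inequality in $\BR^2$ together with Lemma \ref{lem1} yields the pointwise bound $f\,(Tf)+g\,(Tg)=F\cdot(TF)\le |F|\,|TF|\le |F|\,(T|F|)$ a.e., using $|TF|=\sqrt{(Tf)^2+(Tg)^2}\le T\sqrt{f^2+g^2}=T|F|$. Integrating gives $\inner{f}{e^{-th}f}+\inner{g}{e^{-th}g}\le\inner{\sqrt{f^2+g^2}}{e^{-th}\sqrt{f^2+g^2}}$, so subtracting from $\norm{f}^2+\norm{g}^2=\norm{\sqrt{f^2+g^2}}^2$ and dividing by $t$ shows that the difference quotient for $\sqrt{f^2+g^2}$ is dominated by the sum of those for $f$ and $g$. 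Letting $t\downarrow 0$ (the right side tending to $(f,hf)+(g,hg)<\infty$ since $f,g\in Q(h)$) proves $\sqrt{f^2+g^2}\in Q(h)$ and \eqref{311}. For \eqref{310} the same scheme works once $\inner{|f|}{e^{-th}|f|}\ge\inner{f}{e^{-th}f}$ is established: for real $f$ this is immediate from $f=f_+-f_-$ and $\inner{f_\pm}{e^{-th}f_\mp}\ge0$, while for complex $f$ it follows from $\inner{f}{e^{-th}f}\le|\inner{f}{e^{-th}f}|\le\inner{|f|}{e^{-th}|f|}$.

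The step I expect to be most delicate is precisely the passage from the bounded-operator, pointwise statement of Lemma \ref{lem1} to a genuine form inequality: one must justify the monotone difference-quotient limit and deduce the memberships $|f|,\sqrt{f^2+g^2}\in Q(h)$ a posteriori, rather than merely estimating a form already assumed finite. Only $f,g\in Q(h)$ (or $f\in\dom(h)$) is given, so the semi-group identity must itself deliver the finiteness of the form on $|f|$ and $\sqrt{f^2+g^2}$; the monotonicity of $t\mapsto t^{-1}(1-e^{-t\lambda})$ is exactly what makes this argument close.
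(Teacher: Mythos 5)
Your proposal is correct and follows essentially the same route as the paper: both express the quadratic form as the monotone limit $\lim_{t\downarrow 0}t^{-1}\inner{u}{(1-e^{-th})u}=(u,hu)$ and transfer the pointwise bound of Lemma \ref{lem1} into the form inequality, with membership in $Q(h)$ delivered by the monotonicity of the difference quotient. You merely make explicit some steps the paper glosses over (the reduction to $h\geq 0$, the pointwise Cauchy--Schwarz argument in $\BR^2$ showing $\inner{f}{e^{-th}f}+\inner{g}{e^{-th}g}\leq\inner{\sqrt{f^2+g^2}}{e^{-th}\sqrt{f^2+g^2}}$, and the real/complex case distinction for $\inner{|f|}{e^{-th}|f|}\geq\inner{f}{e^{-th}f}$), all of which are sound.
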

\begin{proof}
 Note that $u\in Q(h)$ if and only if $t^{-1}\inner{u}{(1-e^{-th})u}$ converges as $t\to 0$.
Assume that $f\in \dom(h)$. Then
\begin{align}
  \inner{f}{hf} = \lim_{t\to 0} t^{-1}(f, (1-e^{-th})f) \geq \lim_{t\to 0} t^{-1}(|f|,(1-e^{-th})|f|)
  = (|f|,h|f|) >-\infty,
\end{align}
which proves \eqref{310}. Next we assume $f,g\in Q(h)$. By Lemma \ref{lem1}, we have
\begin{align}
 (f,hf)+(g,hg) &= \lim_{t\to 0} t^{-1}\left[ (f,(1-e^{-th})f) + (g, (1-e^{-th})g) \right] \\
                    &\geq \lim_{t\to 0} t^{-1}  (\sqrt{f^2+g^2},(1-e^{-th}) \sqrt{f^2+g^2})\\
                   & = (\sqrt{f^2+g^2}, h \sqrt{f^2+g^2}) > -\infty,
\end{align}
which implies that $\sqrt{f^2+g^2} \in Q(h)$ and \eqref{311} holds.
\end{proof}

\begin{proof}[Proof of Theorem  \ref{thm}]
By using the functional integration, it is proved that 
there exists a $\sigma$-finite measure space $(\cL_\mathrm{E}, \Sigma_E, \mu_E)$
and unitary operator $U:\cH_\mathrm{phot} \to L^2(\cL_\mathrm{E},d\mu_E)$ such that 
$I\tensor U e^{-tH^0} I\tensor U^{-1}$ is positivity preserving
(see \cite[Corollary 7.64]{LHB11}).
We set $\tilde{H}^0 = I\tensor U H^0 I \tensor U^{-1}$.
For arbitrary fixed $\ep>0$, we can choose normalized vectors $F \in \cD$ and
 $\phi \in C_\mathrm{c}^\infty(\BR^3)$ such that
\begin{align}
& \inner{F}{H^0F} < E^0(N) + \ep \\
& \inner{\phi}{h^V\phi} < -e_0+\ep \\
& \phi(\bx) \geq 0, \quad \bx\in\BR^3.
\end{align}
For each $\by\in\BR^3$, we define the translation operator
\begin{align}
  \cT_\by := \exp\left(-i\by\cdot\sum_{i=1}^N \bp_i \right) \tensor \exp(-i\by\cdot d\Gamma(\bk)).
\end{align}
One can show that $\cT_\by \cD = \cD$ and $H^0$ is translation invariant, 
i.e., $\cT_\by^{-1} H^0 \cT_\by= H^0$.
We set $\til{F}=(I\tensor U) F \in L^2(\cL_\mathrm{E},d\mu_\mathrm{E})$. Our test function is 
\begin{align}
 \Phi_\by 
 = \left[\sum_{i=1}^N \phi(\hat\bx_i)^2\right]^{1/2} \cT_\by I\tensor U^{-1} |\til{F}|,
\end{align}
where $\phi(\hat\bx_i)$ denotes the multiplication operator by the function $\phi(\bx_i)$.
Note that
\begin{align}
\int_{\BR^3}d\by \norm{\Phi_\by}^2 
= \sum_{i=1}^N \norm{\phi(\bx_j)}_{L^2(\BR^3)}^2 \cdot \norm{|\til{F}|}^2 =N,
\end{align}
and
\begin{align}
& \int_{\BR^3} d\by \inner{\Phi_\by}{\sum_{i=1}^N V(\bx_i)\Phi_\by} \\
& = \sum_{i,j}\int_{\BR^3} d\by \phi(\bx_i+\by)^2 V(\bx_j+\by) \inner{F(\bX)}{F(\bX)}_{\cH_\mathrm{phot}} \\
& \leq \sum_{i=1}^N \int_{\BR^3} d\by \phi(\bx_i+\by)^2 V(\bx_i+\by) \inner{F(\bX)}{F(\bX)}_{\cH_\mathrm{phot}} \\
& =N \inner{\phi}{V\phi}, \label{325}
\end{align}
where we used the condition (H.4).
By Lemma \ref{lem2}, we have $\Phi_\by \in Q(H^0)$ and
\begin{align}
 (\Phi_\by, H^0\Phi_\by)
& = \left(\left[\sum_{i=1}^N \phi(\hat\bx_i+\by)^2\right]^{1/2}|\til F|, \til{H}^0\left[\sum_{i=1}^N \phi(\hat\bx_i+\by)^2\right]^{1/2}|\til{F}|\right) \\
&= \left(\left[\sum_{i=1}^N \phi(\hat\bx_i+\by)^2|\til{F}|^2\right]^{1/2}, \til{H}^0\left[\sum_{i=1}^N \phi(\hat\bx_i+\by)^2 |\til{F}|^2\right]^{1/2}\right) \\
& \leq  \sum_{i=1}^N \left( \phi(\hat\bx_i+\by)|\til{F}|, \til{H}^0 \phi(\hat\bx_i+\by)|\til{F}| \right) \\
& \leq  \sum_{i=1}^N \left( \phi(\hat\bx_i+\by)\til{F}, \til{H}^0 \phi(\hat\bx_i+\by) \til{F} \right) \\
& =  \sum_{i=1}^N \inner{\phi(\hat\bx_i+\by) F}{ H^0 \phi(\hat\bx_i+\by) F } \label{330}
\end{align}
 \begin{lemma}{\label{lem3}} For $i=1,\dots,N$, we have
\begin{align}
& \int_{\BR^3}d\by \inner{\phi(\hat\bx_i+\by)F}{H^0 \phi(\hat\bx_i+\by)F} \\
& \leq \inner{F}{H^0 F} + \inner{\phi}{(\sqrt{-\Delta+M^2}-M)\phi}_{L^2(\BR^3)}.
 \label{332}
\end{align}
 \end{lemma}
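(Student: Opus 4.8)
The plan is to separate from $H^0$ the single summand that fails to commute with the localizing multiplication operator $\psi_\by := \phi(\hat\bx_i+\by)$, and to treat that summand by subordinating the relativistic kinetic energy to the magnetic heat semigroup. First I would write $H^0 = T_\bA(\bp_i) + R_i$, where $R_i := \sum_{j\neq i} T_\bA(\bp_j) + H_f + \sum_{k<l} W(\bx_k-\bx_l)$. Since $\psi_\by$ is multiplication by the real function $\phi(\bx_i+\by)$ of the $i$-th coordinate alone, it commutes with every summand of $R_i$ (for $j\neq i$ the operator $T_\bA(\bp_j)$ involves neither $\bp_i$ nor $\bx_i$, $H_f$ acts only on the field, and the $W$'s are multiplication operators). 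Hence $\inner{\psi_\by F}{R_i\psi_\by F} = \inner{F}{R_i\psi_\by^2 F}$, and because $\int_{\BR^3}\phi(\bx_i+\by)^2\,d\by = \norm{\phi}^2 = 1$ pointwise in $\bx_i$, Fubini gives $\int_{\BR^3}d\by\,\inner{\psi_\by F}{R_i\psi_\by F} = \inner{F}{R_iF}$. This reduces Lemma~\ref{lem3} to the one-particle estimate
\[
 \int_{\BR^3}d\by\,\inner{\psi_\by F}{T_\bA(\bp_i)\psi_\by F}
 \leq \inner{F}{T_\bA(\bp_i)F} + \inner{\phi}{(\sqrt{-\Delta+M^2}-M)\phi}.
\]
That $\psi_\by F$ lies in the form domain of $L_{i,\bA}^{1/2}$ is routine, since $(\bp_i-q\bA(\hat\bx_i))\psi_\by F = \psi_\by(\bp_i-q\bA(\hat\bx_i))F -i(\nabla\phi)(\hat\bx_i+\by)F \in \cH$ because $F\in\cD$ and $\phi$ is smooth with compact support.

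Next I would subordinate. Using $L_{i,\bA}\geq M^2>0$, the identity $\sqrt a = \frac{1}{2\sqrt\pi}\int_0^\infty(1-e^{-ta})t^{-3/2}\,dt$, and $e^{-tL_{i,\bA}} = e^{-tM^2}S_t$ with $S_t := e^{-t(\bp_i-q\bA(\hat\bx_i))^2}$, the spectral theorem gives for any $G$ in the form domain
\[
 \inner{G}{T_\bA(\bp_i)G}
 = \frac{1}{2\sqrt\pi}\int_0^\infty e^{-tM^2}\bigl(\norm{G}^2-\inner{G}{S_tG}\bigr)\,t^{-3/2}\,dt,
\]
and $\inner{\phi}{(\sqrt{-\Delta+M^2}-M)\phi}$ is given by the same formula with $S_t$ replaced by $e^{t\Delta}$ and $G$ by $\phi$. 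The integrands are nonnegative (as $\|S_t\|\le 1$), so inserting $G=\psi_\by F$ and integrating in $\by$ is justified by Tonelli; using $\int d\by\,\norm{\psi_\by F}^2 = \norm{F}^2 = 1$, the displayed estimate reduces to showing, for each fixed $t>0$,
\[
 \inner{F}{S_tF} - \int_{\BR^3}d\by\,\inner{\psi_\by F}{S_t\psi_\by F}
 \leq \norm{\phi}^2 - \inner{\phi}{e^{t\Delta}\phi}.
\]

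The heart of the argument is this last inequality. Let $\mathcal K_t(\bx,\bx')$ be the operator-valued integral kernel of $S_t$ in the $i$-th variable $\bx=\bx_i$, acting on the remaining fiber Hilbert space (the field together with the other particles), and let $q_t$ be the scalar heat kernel of $e^{t\Delta}$ on $L^2(\BR^3)$. With the autocorrelation $C(\bx-\bx') := \int_{\BR^3}\phi(\bx+\by)\phi(\bx'+\by)\,d\by$, which satisfies $0\le C\le C(0)=\norm{\phi}^2=1$ by Cauchy--Schwarz, a direct computation gives
\[
 \inner{F}{S_tF} - \int_{\BR^3}d\by\,\inner{\psi_\by F}{S_t\psi_\by F}
 = \int d\bx\,d\bx'\,\bigl(1-C(\bx-\bx')\bigr)\inner{F(\bx)}{\mathcal K_t(\bx,\bx')F(\bx')}.
\]
The decisive input is the Feynman--Kac--It\^o (diamagnetic) bound $\bigl|\inner{u}{\mathcal K_t(\bx,\bx')v}\bigr| \leq q_t(\bx-\bx')\norm{u}\,\norm{v}$, which dominates the field-coupled magnetic heat kernel by the free one. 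Since $1-C\geq 0$, combining this with $\norm{F(\bx)}\,\norm{F(\bx')} \leq \tfrac12(\norm{F(\bx)}^2+\norm{F(\bx')}^2)$ and the symmetry of $(1-C)q_t$ bounds the right-hand side by
\[
 \int d\bx\,\norm{F(\bx)}^2\int d\bx'\,(1-C(\bx-\bx'))q_t(\bx-\bx')
 = 1-\inner{\phi}{e^{t\Delta}\phi},
\]
where I use $\int q_t(\bx-\bx')\,d\bx'=1$, the $\bx$-independent identity $\int C(\bx-\bx')q_t(\bx-\bx')\,d\bx' = \inner{\phi}{e^{t\Delta}\phi}$, and $\int\norm{F(\bx)}^2\,d\bx=1$. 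Because $\norm{\phi}^2=1$, this is exactly the inequality required above, and the three steps combine to give the lemma.

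The hard part will be the diamagnetic kernel bound for $S_t=e^{-t(\bp_i-q\bA(\hat\bx_i))^2}$ with the \emph{quantized} field $\bA$; it rests on the Feynman--Kac--It\^o representation together with the positivity-preserving structure of the functional-integral framework already invoked for $e^{-tH^0}$, and, crucially, it holds for arbitrary (not necessarily nonnegative) $F$ because only the modulus of the kernel enters. Everything else---form-domain membership of $\psi_\by F$, the Tonelli interchanges in $\by$ and in $t$, and the reality of the quantities compared---I would record as routine remarks.
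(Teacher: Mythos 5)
Your proposal is correct and follows essentially the approach the paper relies on: the paper omits this proof, deferring to \cite[Corollary 3.3]{HS10}, and your reconstruction --- splitting off the $i$-th kinetic term (everything else commutes with $\phi(\hat\bx_i+\by)$ and averages out since $\norm{\phi}=1$), the subordination formula $\sqrt{a+M^2}-M=\frac{1}{2\sqrt{\pi}}\int_0^\infty e^{-tM^2}(1-e^{-ta})\,t^{-3/2}\,dt$, and the diamagnetic Feynman--Kac--It\^o kernel bound supplied by the same functional-integral framework \cite{LHB11} that the paper invokes for positivity preservation --- is precisely that functional-integration argument, with the autocorrelation identity, the Tonelli interchanges, and the symmetrization step all checking out. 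One cosmetic remark: the parenthetical ``$L_{i,\bA}\geq M^2>0$'' would fail for $M=0$, which Theorem \ref{thm} allows, but your argument never actually uses $M>0$, so nothing breaks.
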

 \begin{proof}
The proof of the lemma is essentially same as the proof of \cite[Corollary 3.3]{HS10}.
So we omit it.
 \end{proof}
By combining estimates \eqref{325}, \eqref{330} and \eqref{332}, we have
\begin{align}
 \int_{\BR^3}d\by (\Phi_\by, H^V\Phi_\by)
 &\leq N \inner{F}{H^0F} + N \inner{\phi}{h^V\phi} \\
 & \leq N (E^0(N)-e_0+2\ep),
\end{align}
which implies that there exist $\by\in\BR^3$ such that $\norm{\Phi_\by}\neq 0$ and 
\begin{align}
  E^V(N)\norm{\Phi_\by}^2 
 \leq  (\Phi_\by, H^V\Phi_\by) 
 < (E^0(N) - e_0+2\ep) \norm{\Phi_\by}^2.
\end{align}
Since $\ep>0$ is arbitrarily, inequality \eqref{thmineq} holds.
 \end{proof}

\section*{Acknowledgments}
I. S. thanks F. Hiroshima for his comments on the theory of functional integration.


\providecommand{\bysame}{\leavevmode\hbox to3em{\hrulefill}\thinspace}
\providecommand{\MR}{\relax\ifhmode\unskip\space\fi MR }
\providecommand{\MRhref}[2]{%
  \href{http://www.ams.org/mathscinet-getitem?mr=#1}{#2}
}
\providecommand{\href}[2]{#2}

\end{document}